\newcommand{\bra}[1]{\left\langle{#1}\right\vert}
\newcommand{\ket}[1]{\left\vert{#1}\right\rangle}
\newcommand{\qw}[1][-1]{\ar @{-} [0,#1]}
\newcommand{\qwx}[1][-1]{\ar @{-} [#1,0]}
\newcommand{\cw}[1][-1]{\ar @{=} [0,#1]}
\newcommand{\cwx}[1][-1]{\ar @{=} [#1,0]}
\newcommand{\gate}[1]{*{\xy *+<.6em>{#1};p\save+LU;+RU **\dir{-}\restore\save+RU;+RD **\dir{-}\restore\save+RD;+LD **\dir{-}\restore\POS+LD;+LU **\dir{-}\endxy} \qw}
\newcommand{\meter}{\gate{\xy *!<0em,1.1em>h\cir<1.1em>{ur_dr},!U-<0em,.4em>;p+<.5em,.9em> **h\dir{-} \POS <-.6em,.4em> *{},<.6em,-.4em> *{} \endxy}}
\newcommand{\control}{*!<0em,.025em>-=-{\bullet}}
\newcommand{\ctrl}[1]{\control \qwx[#1] \qw}
\newcommand{\lstick}[1]{*!R!<.5em,0em>=<0em>{#1}}
\newcommand{\Qcircuit}[1][0em]{\xymatrix @*[o] @*=<#1>}
\newcommand{\ket}[1]{\left | \, #1 \right \rangle}
\newcommand{\bra}[1]{\left \langle #1 \, \right |}
	\renewcommand{\cite}[1]{[??]}
\newcommand{\proj}[2]{\ket{#1}\!\!\bra{#2}}
\newcommand{\tr}[2][ ]{\text{Tr}_{#1}\!\left( #2 \right)}
\begin{document}

\title{Generalized self-testing and the security of the 6-state protocol}

\author{Matthew McKague \inst{1} and Michele Mosca\inst{1,2}}
\institute{Institute for Quantum Computing and Department of Combinatorics, University of Waterloo, Waterloo, Ontario N2L 3G1, Canada \and Perimeter Institute for Theoretical Physics, 31 Caroline Street
North, Waterloo, ON, N2L 2Y5, Canada}

\maketitle

\begin{abstract}
Self-tested quantum information processing provides a means for doing useful
information processing with untrusted quantum apparatus.
Previous work was limited to performing computations and protocols in real
Hilbert spaces, which is not a serious obstacle if one is only interested in
final measurement statistics being correct (for example, getting the correct
factors of a large number after running Shor's factoring algorithm). This
limitation was shown by McKague et al. to be fundamental, since there is no
way to experimentally distinguish any quantum experiment from a special
simulation using states and operators with only real coefficients.

In this paper, we show that one can still do a meaningful self-test of
quantum apparatus with complex amplitudes. In particular, 
we define a family of simulations of quantum experiments, based on complex conjugation, with two
interesting properties.  First, we are able to define a self-test which may
be passed only by states and operators that are equivalent to simulations
within the family.  This extends work of Mayers and Yao and Magniez et al.
in self-testing of quantum apparatus, and includes a complex measurement.  Second, any of the simulations in the family may be used to implement a secure 6-state QKD protocol, which was previously not known to be implementable in a self-tested framework.

\end{abstract}

\section{Introduction}

In \cite{Mayers:2004:Self-testing-qu}, \cite{Mayers:1998:Quantum-Cryptog}, Mayers and Yao introduced the concept of self-testing quantum apparatus with a test for EPR sources and a select set of measurements.  In a parallel development, van Dam et al. \cite{van-Dam:1999:Self-Testing-of} introduced the notion of self-testers for quantum circuits in the case where the dimension of the Hilbert space is known.  These results were then combined and improved upon by Magniez et al. in \cite{Magniez:2006:Self-testing-of}, who give a construction for self-testable circuits without knowledge of the dimension of the Hilbert space.

The Mayers-Yao test, and the test of Magniez et al., only allowed for the testing of states and operators that are equivalent to states and operators in a real Hilbert space.  McKague et al. \cite{McKague:2009:Simulating-Quan} showed that in such settings with untrusted apparatus, one cannot experimentally distinguish a quantum system with states and evolution involving complex amplitudes from a special simulation using only real amplitudes.  In addition to the implications for self-testing untrusted quantum apparatus, this also resolved an open question posed by Gisin \cite{Gisin:2007:Bell-inequaliti} related to the violation of Bell inequalities.  It is important to note that the real simulation does not preserve inner product relationships from the system it is simulating.  At first glance, this suggests that the well-known 6-state protocol \cite{Bennett:1984:Eavesdrop-detec}, \cite{Bruss:1998:Optimal-Eavesdr} might not be secure in a setting with untrusted apparatus, since the simulated versions of the six quantum states could be more distinguishable than the proofs of security assume, and an adversary could exploit this additional distinguishability and compromise security.

In fact, it is easy to describe such an insecure implementation of the
6-state protocol with untrusted apparatus, however even an implementation of
standard BB84 quantum key establishment with untrusted apparatus is insecure
if proper measures are not taken in order to rule out ``side-channel''
attacks. We show that with comparable precautions as those proposed by
Mayers and Yao the 6-state protocol remains secure.

This paper starts by describing a general family of simulations that will reproduce the same statistics as any given ``reference'' experiment, and are thus experimentally indistinguishable from said experiment.  We show how the real Hilbert space simulation given in \cite{McKague:2009:Simulating-Quan} is equivalent to a special case of this family of simulations.  The fact that these simulations work is not very surprising: they are essentially mixtures of the reference experiment, or the complex conjugate of the reference experiment. Thus, we have a more general collection of experiments that are experimentally indistinguishable in a self-testing framework. What is particularly remarkable is that we are able to describe, in section~\ref{sec:extmayersyaoref}, a family of self-tests which can only be passed by simulations from the general family we describe (up to equivalence, as defined below).  This is summarized in Theorem~\ref{theorem:extmayersyao}.

The self-tests allow us to put a physical experiment in a general collection of experiments, and we are then able to show that the 6-state protocol is secure for all the experiments within this collection.  This shows that it is possible to define a secure 6-state protocol within the self-testing framework.

In section~\ref{sec:mayersyao}, we prepare for the proof of Theorem~\ref{theorem:extmayersyao}, by discussing the Mayers-Yao self-tested source result given in Theorem~\ref{theorem:mayers-yao}, a new proof of which is given in appendix~\ref{sec:mayers-yaoproof}.  This new proof is shorter and simpler, and more easily extended to prove our more general result.

Then, in section~\ref{sec:extmayersyaoref}, we describe a new self-test for an EPR source and local measurement apparatus that will uniquely characterize the general equivalence class associated with this quantum state and measurement operators.

In section~\ref{sec:selftestcrypto}, we discuss the cryptographic implications, and why a properly
self-tested 6-state protocol is still secure.

Lastly, in section~\ref{sec:discussion}, we discuss some open problems and future directions,
including the robustness of the generalized self-tests.

\section{Simulations}\label{sec:simulations}
In this section we extend the work of McKague et al. in \cite{McKague:2009:Simulating-Quan}.  There the authors gave a construction that allowed the outcomes of any experiment (the \emph{reference} experiment) to be duplicated (\emph{simulated}) by another experiment (the \emph{simulation}) which is described entirely using real numbers.  That is to say, all the states, measurement operators, unitaries, Kraus operators, and Hamiltonians are given as vectors and matrices over the real numbers.  Of particular interest here is the fact that the simulation is, in general, not equivalent to the reference experiment according to definition~\ref{def:equivalence} below.

In this section we give a construction for a wider family of simulations.  The different simulations in the family are, in general, not equivalent to either the reference experiment nor each other.  We will be most interested in states and measurements, but, as with the real simulation in \cite{McKague:2009:Simulating-Quan}, it is also possible to simulate discrete and continuous time evolution.

The simulations rely on the simple observation that transforming an experiment by complex conjugation does not alter the statistics it generates.  We could also take a classical mixture of the reference experiment and its complex conjugation, flipping a coin (or controlling on a qubit) beforehand to decide which one to perform.  In the remainder of this section we fill in some details about the simulations defined by these mixtures.

\subsection{States and measurements}

Consider a reference state\footnote{We may consider mixed states as well, but it is not necessary for our discussion.} $\ket{\psi}$ measured according to a reference POVM $\{ P_{k}\}$.  We may duplicate the statistics of this experiment using the complex conjugate state $\ket{\psi^{*}}$ and POVM $\{P_{k}^{*}\}$.  In addition, we could do some combination of the two; we may add an additional qubit register which records which of the two experiments to perform:  $\ket{0}$ for the reference experiment, and $\ket{1}$ for the complex conjugate.  This qubit may be in any state, and not necessarily pure.  We then arrive at a new state
\begin{equation}\label{eq:simstate}
\rho^{\prime} = a \proj{0}{0} \otimes \proj{\psi}{\psi} + (1-a) \proj{1}{1} \otimes \proj{\psi^{*}}{\psi^{*}} + c \proj{0}{1}\otimes \proj{\psi}{\psi^{*}} + c^{*} \proj{1}{0} \otimes \proj{\psi^{*}}{\psi}
\end{equation}
with $a \geq 0$ and $|c| \leq \sqrt{a(1-a)}$.  The important feature is that when we project onto $\proj{0}{0}$ or $\proj{1}{1}$ we get either $\ket{\psi}$ or $\ket{\psi^{*}}$, respectively.  For the measurement, we form the POVM
 \begin{equation}\label{eq:cont_sim_povm}
\{ \proj{0}{0} \otimes P_{k} + \proj{1}{1} \otimes P^{*}_{k}\}.
\end{equation}
This POVM measurement is equivalent to measuring the added qubit, collapsing the state into either $\ket{\psi}$ or $\ket{\psi^{*}}$ and then measuring either $\{P_{k}\}$ or $\{P_{k}^{*}\}$ as appropriate; thus the statistics of the experiment are preserved.

Different simulations are arrived at by choosing different values of $a$ and $c$.  If $a=1$ and $c=0$ then we obtain the reference experiment.  For $a=0$ and $c=0$ we obtain the complex conjugate.  Another interesting case is when $a=c=\frac{1}{2}$, in which case we obtain (up to a local change of bases) the real simulation of \cite{McKague:2009:Simulating-Quan} as shown in appendix~\ref{sec:real_sim_in_cont_sim}.

\subsection{Operators}
Although it will not be necessary for our discussion, it is possible to simulate a reference experiment which includes evolution, according to a unitary, completely-positive map, or Hamiltonian.  The details are discussed in appendix~\ref{sec:evoinsimulations}.

\subsection{Non-local computations}
For multi-party experiments, such as the Mayers-Yao test, we would need the simulation to be performed in a local fashion with the measurements operating on local systems only.  As defined above this not the case, but it is easy to modify the operators to make it so.  We simply add an extra qubit for each party and record in each qubit whether to perform the reference experiment or the complex conjugate.  We arrive at states analogous to that in equation~\ref{eq:simstate}, but with $\ket{0}$ and $\ket{1}$ replaced with logical states $\ket{\overline{0}} = \ket{00\dots0}$, $\ket{\overline{1}}=\ket{11\dots1}$ defined on the extra qubits held by the various parties.  Finally, each party conditions their operations on their local copy of the qubit, applying either the reference operation or the complex conjugate.

\section{Mayers-Yao self-test}\label{sec:mayersyao}
The goal of the Mayers-Yao test is to compare two experiments.  The first experiment is the \emph{reference} experiment, which is the experiment we wish to implement.  It is a blueprint, or gold standard, against which we compare the other experiment, the physical experiment, which is the experiment that is actually performed.  Within the physical experiment we consider the entire physical apparatus, including the environment, so that we obtain a pure state on a Hilbert space of unknown dimension (however, we will limit ourselves to finite dimensions.)  The reference and physical experiments consist of reference and physical states, operations, and measurements.  The two experiments are compared through the statistics that they generate.

\subsection{Equivalence}

The proof considers a particular reference experiment, as described in section~\ref{sec:reftest}.  This experiment is defined on a pair of qubits, so we will limit our discussion to such systems.  As well, we consider only pure states - the physical system is unlimited (but finite) in size, so we may include the environment to obtain a pure state.  The conclusion of Mayers and Yao is that if the statistics of a physical experiment agree with that of the reference experiment, then the physical experiment is equivalent to the reference experiment, under a particular notion of equivalence.

When defining a notion of equivalence in this setting we must first consider how me might change the reference experiment in a way that preserves the statistics of the outcomes.  Any such change is invisible from the perspective of the statistics and hence we cannot rule them out.  Here is a list of such changes:

\begin{enumerate}
	\item Local changes of basis
	\item Adding ancillae to physical systems, prepared in any joint state (the measurement does not act on them)
	\item Changing the action of the observables outside the support of the state
	\item Locally embedding the state and operators in a larger (or smaller) Hilbert space.
\end{enumerate}

In order to accommodate these various changes we define equivalence as follows.

\begin{definition}\label{def:equivalence}
 A {\it reference experiment} is described by a $n$-partite state $\ket{\psi}$ on Hilbert space $\mathcal{X} = \mathcal{X}_{1} \otimes \dots \mathcal{X}_{n}$ and local measurement observables $M_{m}$ for various $m$.  Further, consider a physical experiment described by a $n$-partite state $\ket{\psi^{\prime}}$ on Hilbert space $\mathcal{Y} = \mathcal{Y}_{1}\otimes \dots \otimes \mathcal{Y}_{n}$ and local measurement observables $M^{\prime}_{m}$ for various $m$.  We say that the physical experiment is \emph{equivalent}\footnote{Note that this is not an equivalence relation since it is not symmetric.} to the reference experiment (and the physical state and measurement observables are equivalent to the reference state and measurement observables) if there exists a local isometry
\begin{equation}
\Phi = \Phi_{1} \otimes \dots \Phi_{n}, \, \, \, \,
\Phi_{j} : \mathcal{Y}_{j} \mapsto \mathcal{Y}_{j} \otimes \mathcal{X}_{j}
\end{equation}
 such that
\begin{eqnarray}
\Phi(\ket{\psi^{\prime}}) & = & \ket{junk}_{\mathcal{Y}} \otimes \ket{\psi}_{\mathcal{X}}\\
\Phi(M^{\prime}_{m}\ket{\psi^{\prime}}) & = & \ket{junk}_{\mathcal{Y}} \otimes M_{m}\ket{\psi}_{\mathcal{X}}.
\end{eqnarray}

\end{definition}

The isometry $\Phi$ may be constructed by attaching ancillae in some product state $\ket{00\dots 0}_{\mathcal{X}}$ and applying local unitaries to the subsystems.  Note that if we make any finite number of changes from the list above then we may construct a suitable local isometry and show that the experiment is equivalent to the reference experiment.  Also, any experiment that is equivalent to the reference experiment may be constructed by applying changes from the list above:  one simply attaches ancillae in the state $\ket{junk}$ and performs a suitable change of basis.  The content of the main theorem is that, for a carefully chosen experiment, these are the \emph{only} changes that preserve the statistics.

\begin{theorem}[Mayers and Yao]\label{theorem:mayers-yao}
Suppose a physical experiment reproduces the statistics of the reference experiment described in section~\ref{sec:reftest}.  Then the physical experiment is equivalent to the reference experiment.
\end{theorem}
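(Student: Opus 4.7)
The plan is to follow the now-standard strategy for self-testing proofs: first extract, purely from the observed statistics, enough on-state algebraic identities for the physical observables, and then build an explicit local isometry whose verification reduces to those identities.

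First I would derive the key relations. From the perfect correlations $\bra{\psi'} X'_A X'_B \ket{\psi'} = 1$ and $\bra{\psi'} Z'_A Z'_B \ket{\psi'} = 1$, together with the fact that the physical observables square to the identity, one obtains the on-state transfer identities $X'_A \ket{\psi'} = X'_B \ket{\psi'}$ and $Z'_A \ket{\psi'} = Z'_B \ket{\psi'}$. The diagonal observable $D = (X+Z)/\sqrt{2}$ of the Mayers-Yao set supplies a further constraint: expanding $\bra{\psi'} D'_A D'_B \ket{\psi'}$ and substituting the transfer identities yields the crucial on-state anticommutation $\{X'_A, Z'_A\}\ket{\psi'} = 0$, and symmetrically for Bob. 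These are all the algebraic facts needed downstream.

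Next I would construct the isometry $\Phi = \Phi_A \otimes \Phi_B$, where each $\Phi_j$ is the SWAP-like circuit used by Magniez et al.: attach an ancilla qubit in state $\ket{0}$, apply a Hadamard to the ancilla, then a controlled-$Z'_j$ with the ancilla as control, then another Hadamard, then a controlled-$X'_j$. Acting with this circuit on $\ket{\psi'}$ and substituting the on-state identities derived above, the computation collapses to $\ket{junk}_{\mathcal{Y}} \otimes \frac{1}{\sqrt{2}}(\ket{00} + \ket{11})_{\mathcal{X}}$, matching the EPR state of the reference experiment. The measurement condition $\Phi(M'_m \ket{\psi'}) = \ket{junk}_{\mathcal{Y}} \otimes M_m \ket{\psi}_{\mathcal{X}}$ is then checked case by case for $M \in \{X, Z, D\}$ by commuting the physical observable through the circuit using the same on-state identities, so that it re-emerges as the corresponding reference observable acting on $\mathcal{X}$.

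The main obstacle, and the source of all the care the argument needs, is that every identity produced in the first step holds only \emph{on the state}, not as an operator equation on the full physical Hilbert space. Throughout the isometry computation one must track which operators are applied to vectors still lying in the support of $\ket{\psi'}$, and refrain from commuting or rearranging factors once the computation has moved a vector off that support. Keeping these on-state relations intact through the controlled-operation calculus is where the proof must be executed most carefully, and the promised simplification of this bookkeeping is precisely what will let the same machinery extend, in section~\ref{sec:extmayersyaoref}, to a self-test that also pins down a genuinely complex observable.
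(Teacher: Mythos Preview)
Your approach matches the paper's almost exactly: derive on-state transfer identities, obtain anticommutation of $X'$ and $Z'$ via the $D$ setting, then apply the SWAP-style isometry (Hadamard, controlled-$Z'$, Hadamard, controlled-$X'$) and verify state and observables. There is, however, one step you gloss over that the paper treats explicitly and that is not optional. You write that ``expanding $\bra{\psi'} D'_A D'_B \ket{\psi'}$ and substituting the transfer identities'' yields anticommutation, but the physical observable $D'$ is \emph{a priori} an arbitrary $\pm 1$-valued observable; there is nothing to expand until you have first established the on-state identity $D'_A\ket{\psi'} = \tfrac{1}{\sqrt{2}}(X'_A+Z'_A)\ket{\psi'}$ (and its $B$-side analogue). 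The paper obtains this from the cross statistics $\bra{\psi'} X'_A D'_B \ket{\psi'} = \bra{\psi'} Z'_A D'_B \ket{\psi'} = \tfrac{1}{\sqrt{2}}$ together with the orthogonality of $X'_A\ket{\psi'}$ and $Z'_A\ket{\psi'}$ (itself deduced from $\bra{\psi'} X'_A Z'_B \ket{\psi'}=0$ and the transfer identities): since $D'_B\ket{\psi'}$ is a unit vector with inner product $\tfrac{1}{\sqrt{2}}$ against each of two orthonormal vectors, it must equal their normalized sum. Only then does $D'_A\otimes D'_B\ket{\psi'}=\ket{\psi'}$ collapse to $(X'_A Z'_A + Z'_A X'_A)\ket{\psi'}=0$. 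You should make this intermediate step explicit, since the $X$--$D$ and $Z$--$D$ cross correlations are precisely what make the Mayers--Yao measurement set rigid; the $D$--$D$ correlation together with the $X$ and $Z$ transfer identities alone does not constrain $D'$ at all and cannot give anticommutation.
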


A simplified proof for the Mayers-Yao self-test is give in appendix~\ref{sec:mayers-yaoproof}.

\subsection{Mayers-Yao self-test reference experiment}\label{sec:reftest}
A general schematic for the Mayers-Yao reference experiment is shown in figure~\ref{eprtest}.  A bipartite state $\ket{\psi}$ is distributed to a pair of measurement devices.  The two measurement devices take classical inputs $a$ and $b$, which each take one of three values.  The devices then output classical bits, $x$ and $y$.

 \begin{figure}
\[ \Qcircuit {
 & & a \\
 & & \cwx[1] \\
  & & \meter & \cw & x \\
\lstick{\ket{\psi}}   \ar@{-}[ur] \ar@{-}[dr]\\
  & & \meter & \cw & y \\
  & & \cwx[-1] \\
  & & b \\
 }
\]
\caption{Mayers-Yao self-test circuit}
\label{eprtest}
\end{figure}
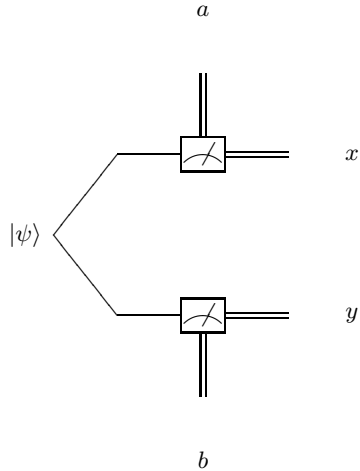

The reference state is an EPR pair $\ket{\phi_{+}} = \frac{1}{\sqrt{2}}\left(\ket{00} + \ket{11} \right)$ and the reference measurement observables are $X, Z, \frac{X + Z}{\sqrt{2}}$ for each side of the EPR pair.  For brevity we label $\frac{X + Z}{\sqrt{2}} = D$.  For the untrusted physical devices this equality is not given, so there the separate label $D$ is required.

\section{Extending the Mayers and Yao self-test}\label{sec:extmayersyao}
The original Mayers and Yao EPR test utilized only a small set of measurements.  Conspicuously missing is anything with complex coefficients.  An important consequence of this is that the circuit test developed by Magniez et al. \cite{Magniez:2006:Self-testing-of} is not able to test gates with complex coefficients; only gates with real coefficients can be tested.  More specifically, real measurements reveal no information about the imaginary component of a density matrix.

In fact the Mayers-Yao self-test cannot be directly extended to include any measurements with complex coefficients.  This is a result of the notion of equivalence used.  Suppose that we wish to include the $Y$ measurement in the set of reference measurements.  The devices could instead implement $-Y$, the complex conjugate.  So long as all complex measurements were complex conjugated it would be impossible to detect this change.  Although this does not present an immediate problem - such a transformation is internally consistent and produces the correct outcome statistics - we cannot transform such a circuit back into the reference circuit using unitary transformations.

If this were the whole story we could simply require that the physical circuit be transformable into either the reference circuit or its complex conjugate.  However, the real simulation, and now the general family of simulations, are also indistinguishable from the reference circuit and not unitarily transformable into the reference circuit.

We have one encouraging fact:  all of the known simulations are equivalent to a simulation from the general family of simulations.  We now prove that we can extend the Mayers-Yao test such that these are the only simulations.  Hence we may extend our notion of equivalence to include these simulations and obtain a new self-testing theorem.

\begin{theorem}\label{theorem:extmayersyao}
Suppose a physical experiment duplicates the statistics generated by the reference experiment described in section~\ref{sec:extmayersyaoref}.  Then the physical experiment is equivalent to one of the simulations of the reference experiment described in section~\ref{sec:simulations}.
\end{theorem}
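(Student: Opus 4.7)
The plan is to bootstrap from the existing Mayers-Yao theorem and then characterize the additional complex observables as a controlled choice between the reference action and its complex conjugate. First, I would restrict the reference statistics to the sub-experiment involving only the real observables $X$, $Z$, $D$ (which, judging by Section~\ref{sec:reftest}, are still in the extended set). Since the physical devices reproduce these statistics as well, Theorem~\ref{theorem:mayers-yao} delivers a local isometry $\Phi = \Phi_A \otimes \Phi_B$ which aligns the real part of the apparatus: $\Phi(\ket{\psi'}) = \ket{\mathrm{junk}} \otimes \ket{\phi_+}$, and $X'_A, Z'_A, D'_A$ act on the extracted qubit as $X, Z, D$ (and similarly for $B$), up to action that vanishes on the state.

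Next, I would analyze the physical complex observable $Y'_A$ (and any other complex measurement in the extended reference experiment) after applying $\Phi$. Using the one-party marginal statistics of $Y'_A$ together with the correlations $\langle Y'_A X'_A \rangle$, $\langle Y'_A Z'_A \rangle$ inherited from the reference $Y X, Y Z$ anticommutation, one shows that, on the support of the extracted qubit, $Y'_A$ must square to the identity and anticommute with $X$ and $Z$. The only such operators on a qubit are $\pm Y$. So, writing the action of $Y'_A$ on the full post-isometry space as an operator on junk $\otimes$ qubit, it decomposes as
\begin{equation}
Y'_A \;\cong\; P_{A,+}\otimes Y \;-\; P_{A,-}\otimes Y,
\end{equation}
where $P_{A,+}, P_{A,-}$ are orthogonal projectors on the junk register, up to components that annihilate the state. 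The pair $(P_{A,+}, P_{A,-})$ is exactly the local control qubit that selects reference versus complex-conjugate behaviour in the simulation family of Section~\ref{sec:simulations}, and the analogous decomposition holds on Alice's side $B$.

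The main obstacle is to establish that the two parties' control projections are perfectly correlated, matching the logical $\ket{\overline{0}}=\ket{00}$, $\ket{\overline{1}}=\ket{11}$ structure enforced in the non-local version of the simulation family. The tool for this is the joint statistics of $Y'_A \otimes Y'_B$, together with cross-correlations like $\langle Y'_A \otimes X'_B \rangle$. On the EPR pair $\ket{\phi_+}$ the reference $Y \otimes Y$ outcome is perfectly anti-correlated, while $(-Y) \otimes Y$ would be perfectly correlated; reproducing the reference value forces the cross-terms $P_{A,+}\otimes P_{B,-}$ and $P_{A,-}\otimes P_{B,+}$ to vanish on $\ket{\mathrm{junk}}$. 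Thus $\ket{\mathrm{junk}}$ lies entirely in the span of $P_{A,+}P_{B,+}$ and $P_{A,-}P_{B,-}$, which, after an inner change of basis on the junk, is exactly the logical-$\ket{\overline{0}}/\ket{\overline{1}}$ decomposition.

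Finally, I would assemble the pieces. Setting $a = \|P_{A,+}P_{B,+}\ket{\mathrm{junk}}\|^2$ and $c = \bra{\mathrm{junk}}P_{A,-}P_{B,-}\cdot P_{A,+}P_{B,+}\ket{\mathrm{junk}}$ (appropriately), the post-isometry state on the combined control + qubit systems has precisely the mixed form
\begin{equation}
\rho' = a\proj{\overline{0}}{\overline{0}}\otimes\proj{\phi_+}{\phi_+} + (1-a)\proj{\overline{1}}{\overline{1}}\otimes\proj{\phi_+^*}{\phi_+^*} + c\proj{\overline{0}}{\overline{1}}\otimes\proj{\phi_+}{\phi_+^*} + c^*\proj{\overline{1}}{\overline{0}}\otimes\proj{\phi_+^*}{\phi_+},
\end{equation}
with the physical observables acting as $P_{A,+}\otimes M + P_{A,-}\otimes M^*$ for each $M\in\{X,Z,D,Y,\dots\}$ (trivially for $M=M^*$). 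By Definition~\ref{def:equivalence}, this is exactly the statement that the physical experiment is equivalent to a simulation from the family of Section~\ref{sec:simulations}, completing the proof.
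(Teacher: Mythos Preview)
Your overall strategy matches the paper's: first run the Mayers--Yao test on $X,Z,D$ to extract $\ket{\phi_{+}}$, then show $Y'_A$ acts as $Y$ on the extracted qubit tensored with a Hermitian unitary $M$ on the junk, and finally use the $Y_A\otimes Y_B$ correlation to force the two sides' ``sign'' choices to agree. The paper differs only cosmetically from your projector decomposition $P_{A,+}\otimes Y - P_{A,-}\otimes Y$: it spectrally resolves $M$ by adjoining a fresh ancilla qubit $R$ and applying a phase-kickback circuit (Hadamard, controlled-$M$, Hadamard), turning $M$ into $Z_{R}$. Your abstract projector picture and the paper's explicit ancilla construction are equivalent.

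There is, however, a real gap in how you obtain the anticommutation $\{Y'_A,X'_A\}=\{Y'_A,Z'_A\}=0$ on the support. You appeal to ``correlations $\langle Y'_A X'_A\rangle$, $\langle Y'_A Z'_A\rangle$'', but these are products of two observables on Alice's side and are not operationally accessible; the two-party $X,Y,Z$ cross-correlations alone only give orthogonality relations such as $\bra{\psi}Y'_A X'_A\ket{\psi}=0$, not anticommutation. In the paper, the extended reference experiment is built from \emph{three} Mayers--Yao sub-tests, using the diagonal observables $E=(X+Y)/\sqrt{2}$ and $F=(Y+Z)/\sqrt{2}$ in addition to $D$. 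The anticommutation lemma from the Mayers--Yao proof, applied to the $(X,Y,E)$ and $(Y,Z,F)$ sub-tests, is what delivers $\{Y'_A,X'_A\}=0$ and $\{Y'_A,Z'_A\}=0$. Without invoking $E$ and $F$ (which you do not mention), the step ``$Y'_A$ must anticommute with $X$ and $Z$'' is unjustified. Once you plug this in, the rest of your argument goes through and coincides with the paper's.
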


\subsection{Extended Mayers-Yao self-test reference experiment}\label{sec:extmayersyaoref}
The extended Mayers-Yao test will consist of three regular Mayers-Yao tests, performed together.  Alice and Bob will perform the Mayers-Yao test with measurement settings (labelled with subscript $A$ when used by Alice, and subscript $B$ when used by Bob):
\begin{enumerate}
                \item $X$, $Z$, and $D$
                \item $X$, $Y$, and $E$
                \item $Y$, $Z$, and $F$.
\end{enumerate}

In the reference experiment the measurement settings $X$, $Y$ and $Z$ are realized by the Pauli operators, with $Y_B=-Y$ and otherwise
$X_A = X_B = X$, $Y_A = Y$, $Z_A  = Z_B = Z$.  The other settings are realized by $D_A = \frac{X+Z}{\sqrt{2}}$, $E_A = \frac{X+Y}{\sqrt{2}}$, $F_A=\frac{Y+Z}{\sqrt{2}}$ on Alice's side and $D_B = \frac{X+Z}{\sqrt{2}}$, $E_B = \frac{X-Y}{\sqrt{2}}$, $F_B=\frac{Z-Y}{\sqrt{2}}$ on Bob's side.  Bob's $Y_B$ measurements all carry the $-1$ phase since measuring the state $\ket{\phi_{+}}$ with the operator $Y\otimes Y$ produces $-1$ instead of $1$ as in the Mayers-Yao reference experiment.  The reference state is again $\ket{\phi_{+}}$.

\subsection{Proof of Theorem~\ref{theorem:extmayersyao}}
We start by assuming that the states are all pure as in the Mayers-Yao test.  Again we may incorporate the purification of a mixed state into either Alice or Bob's state by adding an ancilla.

First we apply the Mayers-Yao result with the measurements $X$, $Z$ and $D$.  We find a local isometry $\Phi$ as in definition~\ref{def:equivalence}.  We extend $\Phi$ by adding an extra qubit for each side initialized in the state $\ket{0}$. Then $\Phi$ takes the $X_{A}$, $Z_{A}$, $X_{B}$ and $Z_{B}$ measurements to $X_{Q_{A}} \otimes I_{R_{A}} \otimes I_{S_{A}}$, $Z_{Q_{B}} \otimes I_{R_{A}} \otimes I_{S_{A}}$, $X_{Q_{B}} \otimes I_{R_{B}} \otimes I_{S_{B}}$ and $Z_{Q_{B}} \otimes I_{R_{B}} \otimes I_{S_{B}}$ where $R_{A}$ and $R_{B}$ are the added qubit registers and $S_{A}$ and $S_{B}$ are the junk registers.  Meanwhile the state has the form $\ket{\phi_{+}}_{Q_{A}Q_{B}} \otimes \ket{00}_{R_{A}R_{B}} \otimes \ket{junk}_{S_{A} S_{B}}$.

We now consider the remaining measurements.  The reference experiments for these measurements can be transformed using local unitaries into the usual Mayers-Yao reference experiments.  Thus we may apply the result.  However, we stop short of using the full result.  Within the proof of Theorem~\ref{theorem:mayers-yao} we achieve the following result.

\begin{lemma}
Suppose a physical experiment reproduces the statistics of the Mayers-Yao reference experiment described in section~\ref{sec:reftest}.  Then the physical measurements $X_{A}$ and $Z_{A}$ anti-commute on the support of the physical state, as do $X_{B}$ and $Z_{B}$.
\end{lemma}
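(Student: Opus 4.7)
The plan is to prove $\{X_A, Z_A\}\ket{\psi^{\prime}} = 0$; the $B$-side claim then follows by the Alice--Bob symmetry of the reference experiment. Every observable that appears below squares to the identity, since each corresponds to a $\pm 1$-valued measurement. From the reference state $\ket{\phi_+}$ one reads off the correlations $\langle X_A X_B\rangle = \langle Z_A Z_B\rangle = \langle D_A D_B\rangle = 1$, $\langle X_A Z_B\rangle = \langle Z_A X_B\rangle = 0$, and $\langle D_A X_B\rangle = \langle X_A D_B\rangle = \langle D_A Z_B\rangle = \langle Z_A D_B\rangle = 1/\sqrt{2}$. Because $X_A X_B$, $Z_A Z_B$ and $D_A D_B$ are $\pm 1$ observables whose expectation on $\ket{\psi^{\prime}}$ is $1$, the state lies in each of their $+1$ eigenspaces, yielding the three operator identities $X_A\ket{\psi^{\prime}} = X_B\ket{\psi^{\prime}}$, $Z_A\ket{\psi^{\prime}} = Z_B\ket{\psi^{\prime}}$, and $D_A\ket{\psi^{\prime}} = D_B\ket{\psi^{\prime}}$.

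Next I would set the Hermitian local deviations $K_A = D_A - (X_A+Z_A)/\sqrt{2}$ and $K_B = D_B - (X_B+Z_B)/\sqrt{2}$. Expanding $\bra{\psi^{\prime}} K_A K_B \ket{\psi^{\prime}}$ against the correlations above gives $1 - 1 - 1 + 1 = 0$. Now $K_A$ and $K_B$ act on different parties so they commute, and by the three identities above $K_A\ket{\psi^{\prime}} = K_B\ket{\psi^{\prime}}$, so
\[
\|K_A\ket{\psi^{\prime}}\|^2 = \bra{\psi^{\prime}} K_A^2 \ket{\psi^{\prime}} = \bra{\psi^{\prime}} K_A K_B \ket{\psi^{\prime}} = 0,
\]
which forces $D_A\ket{\psi^{\prime}} = (X_A+Z_A)\ket{\psi^{\prime}}/\sqrt{2}$.

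To extract anti-commutation, I would apply $(X_A+Z_A)/\sqrt{2}$ to both sides of this identity. The right side becomes $\bigl(I + \tfrac{1}{2}\{X_A, Z_A\}\bigr)\ket{\psi^{\prime}}$, while the left side can be simplified entirely by cross-party substitutions using $D_A\ket{\psi^{\prime}} = D_B\ket{\psi^{\prime}}$, the trivial commutation of $D_B$ with Alice-side operators, and $D_A^2 = I$:
\begin{align*}
\frac{X_A+Z_A}{\sqrt{2}}\,D_A\ket{\psi^{\prime}}
&= \frac{X_A+Z_A}{\sqrt{2}}\,D_B\ket{\psi^{\prime}}
 = D_B\,\frac{X_A+Z_A}{\sqrt{2}}\ket{\psi^{\prime}} \\
&= D_B D_A\ket{\psi^{\prime}}
 = D_A^2\ket{\psi^{\prime}}
 = \ket{\psi^{\prime}}.
\end{align*}
Comparing the two sides yields $\{X_A, Z_A\}\ket{\psi^{\prime}} = 0$, and swapping Alice and Bob gives $\{X_B, Z_B\}\ket{\psi^{\prime}} = 0$.

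The main obstacle is the standard one in this setting: same-party quantities like $\bra{\psi^{\prime}} X_A Z_A \ket{\psi^{\prime}}$ are not directly accessible from the statistics. The remedy used twice above is to build a local deviation $K_A$ on Alice's side together with its twin $K_B$ on Bob's side, evaluate only the measurable cross-party bilinear $\bra{\psi^{\prime}}K_A K_B\ket{\psi^{\prime}}$, and then promote its vanishing to a same-side vector equation $K_A\ket{\psi^{\prime}} = 0$ via the perfect-correlation identities. The final step reuses the same trick, shuttling $D_A$ across the cut to $D_B$ in order to rearrange a same-side product.
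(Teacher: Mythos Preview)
Your argument is correct and follows essentially the same route as the paper's proof in section~\ref{sec:anticommute}: first establish $D_A\ket{\psi^{\prime}} = \tfrac{1}{\sqrt{2}}(X_A+Z_A)\ket{\psi^{\prime}}$, then exploit $D^2=I$ together with the cross-party identities to force the anticommutator to vanish on $\ket{\psi^{\prime}}$. The paper reaches the first identity by a geometric overlap argument (the unit vector $D_A\ket{\psi^{\prime}}$ has inner product $1/\sqrt{2}$ with each of the orthogonal vectors $X_A\ket{\psi^{\prime}}$ and $Z_A\ket{\psi^{\prime}}$), while you reach it via the explicit norm computation $\|K_A\ket{\psi^{\prime}}\|^2 = \bra{\psi^{\prime}}K_AK_B\ket{\psi^{\prime}} = 0$; these are the same calculation in different clothing. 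The only step you have left implicit is the passage from $\{X_A,Z_A\}\otimes I\,\ket{\psi^{\prime}}=0$ to anti-commutation on the support of $\ket{\psi^{\prime}}$ on Alice's system, which the paper isolates as Lemma~\ref{lemma:anticommute_support} via the Schmidt decomposition; you should state this last inference explicitly.
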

This is shown in section~\ref{sec:anticommute}.  When we apply this result to the remaining measurements in the extended test, we find that $X_{A}$ and $Y_{A}$ anti-commute on the support of the state, as do $X_{B}$ and $Y_{B}$, $Z_{A}$ and $Y_{A}$ and $Z_{B}$ and $Y_{B}$.  For the remaining discussion we will limit ourselves to the support of the state.

Consider the $A$ side measurements first.  We may express $Y_{A}$ as
 \[
Y_{A} = \sum_{P,E} y_{P,E} P_{Q_{A}} \otimes I_{R_{A}} \otimes E_{ S_{A}}
\]
where the $P$ ranges over the Pauli operators and the $E$ ranges over a basis for the Hermitian operators on $S_{A}$.

Since $Y_{A}$ anti-commutes with $X_{Q_{A}} \otimes I_{R_{A} S_{A}}$ the coefficients of all the terms with $P=X$ must be 0.  Indeed, since $-Y_{A} = (X_{Q_{A}} \otimes I_{R_{A}S_{A}}) Y_{A} (X_{Q_{A}} \otimes I_{R_{A}S_{A}})$ we have
 \[
-\sum_{P,E} y_{P,E} P_{Q_{A}} \otimes I_{R_{A}} \otimes E_{S_{A}}
 = \sum_{P\in\{I,X\}, E}y_{P,E} P_{Q_{A}} \otimes I_{R_{A}} \otimes E_{S_{A}} -  \sum_{P \in\{Y,Z\}, E} y_{P,E} P_{Q_{A}} \otimes I_{R_{A}} \otimes E_{S_{A}}
\]
where on the right hand side we have separated out the terms that commute with $X_{Q_{A}} \otimes I_{R_{A}S_{A}}$ and those that anti-commute.  We see that we must have $y_{X,E} = -y_{X,E} = 0$ and $y_{I,E} = -y_{I,E} = 0$ for all $E$.

Applying similar reasoning and the test with $Y$ and $Z$ we find that $y_{Z,E} = 0$ for all $E$.  Thus $Y_{A} = Y_{Q_{A}} \otimes I_{R_{A}} \otimes M_{S_{A}}$ for some Hermitian and unitary $M_{S_{A}}$.  Next we compose $\Phi$ with a ``phase kickback'' circuit consisting of a Hadamard gate on the $R_{A}$ register, followed by a controlled $M_{S_{A}}$, controlled on the $R_{A}$ register, and a final Hadamard gate on the $R_{A}$ register.  This results in a new isometry (we will still call it $\Phi$) such that
\begin{equation}
\Phi(Y_{Q_{A}} \otimes M_{S_{A}} \ket{\psi}) = Y_{Q_{A}} \otimes Z_{R_{A}} \ket{\phi_{+}}_{Q}  \ket{junk}_{RS}.
\end{equation}
This is essentially the well known translation of a two outcome measurement into a qubit measurement.  Also, since the addition of the phase kickback did not operate on the junk register the $X$ and $Z$ measurements are not affected.

The above process can be repeated for Bob's side, with analogous conclusions.  In order to be consistent with the reference experiment, we may construct our isomorphism so that 
\begin{equation}
\Phi(Y_{Q_{B}} \otimes M_{S_{B}} \ket{\psi}) = Y_{Q_{B}} \otimes Z_{R_{B}} \ket{\phi_{+}}_{Q}  \ket{junk}_{RS}.
\end{equation}
We have thus shown that the measurements are as in the general simulation.

We now turn our attention to the state.  From the Mayers-Yao test on $X$ and $Z$ we know that the state on $Q_{A} \otimes Q_{B}$ (after applying $\Phi$) is $\ket{\phi_{+}}$.  We next consider the state on the remaining registers, $\ket{junk}_{RS}$.  We may express this in the singular value (Schmidt) decomposition, split between $R_{AB}$ and $S_{AB}$:
\begin{equation}
\ket{\theta} = \sum_{j} \lambda_{j} \ket{j}_{R_{AB}} \ket{j}_{S_{AB}}
\end{equation}
with $\lambda_{j} > 0$.
Since the $Y$ measurement setting gives correlated results (recall we introduced a -1 factor on the $B$ side measurement observable) and the form of $Y_{A}$ and $Y_{B}$, the states $\ket{j}_{R_{AB}}$ must all be $+1$ eigenvectors of $Z_{R_{A}}\otimes Z_{R_{B}}$.  If this were not the case then a $-1$ phase would be introduced and the measurement results would be incorrect at least some of the time.  Thus the only possible states for $\ket{j}_{R_{AB}}$ are superpositions of $\ket{00}$ and $\ket{11}$.  We do some relabelling and arrive at
 \begin{equation}
\ket{\psi} = \ket{\phi_{+}}_{Q_{AB}} \otimes \left(\alpha \ket{00}_{R_{AB}} \ket{\theta_{00}}_{S_{AB}} + \beta \ket{11}_{R_{AB}} \ket{\theta_{11}}_{S_{AB}}\right)
\end{equation}
with $\ket{\theta_{00}}$ and $\ket{\theta_{11}}$ not necessarily orthogonal.  Note that tracing out the $S_{AB}$ ancillae results in a state exactly as described by the multi-party simulation in section~\ref{sec:simulations}.  Thus we have demonstrated that the physical experiment is equivalent to one of the general simulations of the reference experiment, and completed the proof of Theorem~\ref{theorem:extmayersyao}.

\section{Cryptographic setting}\label{sec:selftestcrypto}

Suppose that two or more parties are engaged in a cryptographic protocol using self-tested apparatus.  The extended Mayers-Yao test above allows them to determine that the devices are implementing a simulation from the family of simulations described in section~\ref{sec:simulations}.  Suppose further that the adversary, Eve, knows how the devices are implemented and controls the preparation of the state.  The honest parties only perform operations as specified for the simulation.  Eve, on the other hand, is free to interact with the extra qubits in the simulation in any way she likes.  Does this give any advantage to Eve?

Eve can potentially perform many operations, including entangling a qubit of her own with the extra simulation qubits allowing her to perform simulation operations.  She may also interact in complex ways with the extra simulation qubits along with the original register.  Despite this, we are able to prove that Eve can gain no advantage for some protocols.

We explore a restricted class of protocols that are especially easy to analyse.  These are protocols where the only operation that an honest party will do is a Pauli measurement.  This class includes the six-state quantum key distribution protocol (implemented in as an entanglement based protocol) \cite{Bennett:1984:Eavesdrop-detec}, \cite{Bruss:1998:Optimal-Eavesdr}.  We will demonstrate that these protocols do not leak any more information when implemented using one of the simulations.

The proof is a series of security reductions to protocols in which each reduction only increases Eve's power.  We will show that the final protocol in the reduction is just as secure as the reference protocol (without the simulation applied), hence the simulation protocol is also just as secure as the reference protocol.

For the first reduction we suppose that the participants in the protocol measure their simulation qubit in the $Z$ eigenbasis after the protocol is completed, and transmit the result to Eve.  This does not interfere with the intended protocol and only increases Eve's information.  Since the $Z$ measurement commutes with all simulation operations, the participants could just as well have performed the measurement before the protocol began.  If Eve is the one who prepares the initial state for the simulation (in other cases Eve has strictly less power) then Eve could also perform this measurement herself.  This measurement would collapse the state to an eigenvector of the $Z$ measurements and Eve's strategy would be a mixture of different strategies with the states each an eigenvector of the $Z$ measurements.

Let us examine the result of Eve choosing one of these eigenvector states.  Each of the parties will receive their extra qubit prepared in a $Z$ eigenvector.  The effect of this on their operations is either to perform the protocol's original operation (in the case of a $\ket{0}$) or the complex conjugate (in the case of a $\ket{1}$.)  For Pauli measurements, only the $Y$ measurement is affected: the output bit is flipped in the case of the complex conjugate.

If every party receives the same eigenvector in their extra qubit, then the protocol reduces to either the original or the complex conjugate.  In either case the security is identical to the original protocol.  If the extra qubits are not in the same eigenvector then some $Y$ measurements outcomes will be flipped and some will not.  This does not affect Eve's information since she controls which outcomes are flipped and can undo the flips in her reckoning of the final classical information.  Note that the bit flips may introduce errors into the protocol.  If the protocol does not explicitly check for such errors (as does the 6-state protocol) information will still not be leaked to Eve, however a test for these errors may be required to make sure the protocol functions correctly.  The final protocol, and hence the simulation, is thus as secure as the original protocol.

\section{Conclusions and Future work}\label{sec:discussion}

\subsection{Conclusions}
Theorem~\ref{theorem:extmayersyao}, along with the security result of section~\ref{sec:selftestcrypto}, allows us to analyze the case of the 6-state QKD protocol in the self-tested framework.  In particular we may define a self-testing version of the 6-state protocol in which the extended Mayers-Yao test is incorporated along with the usual 6-state protocol.  Given a robust version of the test (see section~\ref{sec:futurework}) we may first estimate the state and measurement observables, then apply a security proof for the 6-state protocol in order to derive a secure key rate.

Although a self-tested 6-state protocol is currently not practical, nor likely to become so, the result is interesting from a theoretical perspective within the self-tested framework.  Previous results were limited to real Hilbert spaces, one could apply the real simulation explicitly within the reference experiment and then proceed with the self-test.  This works fine for circuits, where only the correct outcome is important, however the 6-state protocol introduces other concerns, namely the possibility of information leaking to an adversary.  The current work thus illustrates how a self-test for complex operations provides additional benefit over the previous self-tests.

\subsection{Future work}\label{sec:futurework}
Note that we have not described a physically realizable test in section~\ref{sec:extmayersyao}.  The proof requires that the expected value of the observables match the reference exactly.  This cannot be established physically without some kind of repeatability assumptions and an infinite number of trials.   The original test by Mayers and Yao was shown to be robust in \cite{Magniez:2006:Self-testing-of}, establishing a polynomial relationship between the precision of the statistics and the closeness to an EPR state. We are currently studying the robustness of these new tests.  This is an important line of future research.  A related task is to extend the results to continuous variable systems.

Another interesting line of research is to follow the same path as Magniez et al. to obtain a self-testing circuit for arbitrary circuits, now allowing complex gates.  The framework and proofs from \cite{Magniez:2006:Self-testing-of} offer a roadmap for such research, but there are some technical problems that arise along the way so a straightforward adaptation is not possible.  These are due to the larger Hilbert space created when adding the extra qubits to allow the simulations.

\subsubsection{Acknowledgements}

This work is supported by Canada's NSERC, QuantumWorks, Ontario Centres of Excellence, MITACS, CIFAR, CRC, ORF, the Government of Canada, and Ontario-MRI.

\bibliography{Global_Bibliography}

\appendix

\section{Evolution in simulations}\label{sec:evoinsimulations}

We can extend the measurement operator defined in \ref{eq:cont_sim_povm} to arbitrary operators.  We define
\begin{equation}\label{eq:cont_sim_op}
C(M) =  \proj{0}{0} \otimes M + \proj{1}{1} \otimes M^{*}.
\end{equation}
Note that $C(M)$ can be expressed differently as
\begin{equation}\label{eq:cont_sim_op2}
C(M) = I \otimes Re(M) + iZ \otimes Im(M)
\end{equation}
where $Re(M)$ and $Im(M)$ are the real and imaginary parts of $M$ (both real matrices).  In the case of a multi-party simulation, the $Z$ operates on a particular party's added qubit.

We summarize some of the properties of $C(M)$ here

\begin{lemma}
Let $M$ and $N$ be matrices.  Then we have the following:
\begin{enumerate}
	\item $C(MN) = C(M)C(N)$.
	\item $C(M + N) = C(M) + C(N)$.
	\item Let $a$ be a real number, then $C(aM) = aC(M)$.
	\item If $\ket{\psi}$ is an eigenvector of $M$ with eigenvalue $\lambda$, then $\ket{0}\ket{\psi}$ and $\ket{1}\ket{\psi}$ are eigenvectors of $C(M)$ with eigenvalues $\lambda$ and $\lambda^{*}$, respectively.
	\item $C(M)$ is Hermitian if and only if $M$ is.
	\item $C(M)$ is unitary if and only if $M$ is.
	\item $C(M)$ is positive semi-definite if and only if $M$ is.
	\item When $M$ is Hermitian, $\tr{C(M)} = 2 \tr{M}$.
\end{enumerate}
\end{lemma}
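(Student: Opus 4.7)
The plan is to exploit the block-diagonal form of $C(M)$: in the basis where the control qubit is diagonal, $C(M)$ is the block-diagonal matrix $\mathrm{diag}(M, M^{*})$. Almost every item in the lemma then reduces either to a direct computation using this block structure together with the elementary identities $(MN)^{*} = M^{*} N^{*}$, $(M+N)^{*} = M^{*}+N^{*}$, $(aM)^{*} = aM^{*}$ for real $a$, and $(M^{\dagger})^{*} = (M^{*})^{\dagger}$, or to a known fact about a matrix being inherited by its complex conjugate.

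Concretely, I would proceed in the following order. First I would establish properties $(2)$ and $(3)$ by expanding the defining expression and using linearity of complex conjugation on the second block; these are immediate. Next I would verify the multiplicativity property $(1)$: multiplying the two block-diagonal matrices $\mathrm{diag}(M,M^{*})$ and $\mathrm{diag}(N,N^{*})$ gives $\mathrm{diag}(MN, M^{*}N^{*}) = \mathrm{diag}(MN, (MN)^{*}) = C(MN)$. For $(4)$, I would simply apply $C(M)$ to $\ket{0}\ket{\psi}$ and $\ket{1}\ket{\psi}$ and observe that the two control-qubit projectors pick out the $M$ and $M^{*}$ blocks respectively; the eigenvalue of $M^{*}$ on the appropriate eigenvector is $\lambda^{*}$ by complex conjugation of the eigenvalue equation.

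Property $(5)$ follows from the identity $C(M)^{\dagger} = C(M^{\dagger})$, which is itself a one-line check using the block structure and $(M^{*})^{\dagger} = (M^{\dagger})^{*}$. Then $(6)$ drops out by combining $(1)$ and $(5)$: $C(M) C(M)^{\dagger} = C(M)C(M^{\dagger}) = C(MM^{\dagger})$, which equals the identity $C(I) = I$ if and only if $MM^{\dagger}=I$ (and similarly on the other side). Property $(7)$ follows because the spectrum of $C(M)$ is the union of the spectra of $M$ and $M^{*}$, and the eigenvalues of $M^{*}$ are the complex conjugates of those of $M$; thus one side has only nonnegative real eigenvalues exactly when the other does, and in the Hermitian case this is equivalent to positive semidefiniteness. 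Finally, $(8)$ is $\mathrm{Tr}(C(M)) = \mathrm{Tr}(M) + \mathrm{Tr}(M^{*}) = \mathrm{Tr}(M) + \overline{\mathrm{Tr}(M)}$, which equals $2\,\mathrm{Tr}(M)$ precisely because $\mathrm{Tr}(M)$ is real for Hermitian $M$.

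There is no real obstacle here: every step is a short block-matrix identity or a standard fact about how Hermiticity, unitarity, positivity, eigenvalues, and traces behave under complex conjugation. The only mildly subtle point worth flagging is in $(4)$, where one must be careful to interpret the eigenvalue of $\ket{1}\ket{\psi}$ correctly (it is the eigenvalue of $M^{*}$ on $\ket{\psi}$, which is $\lambda^{*}$ precisely when $\ket{\psi}$ is taken in the conjugate sense compatible with the eigenvalue equation); this is a notational rather than mathematical hurdle.
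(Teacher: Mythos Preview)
Your proposal is correct and matches the paper's own treatment, which consists of the single line ``These properties can be derived easily''; your block-diagonal reading $C(M)=\mathrm{diag}(M,M^{*})$ is precisely the intended derivation. Your caution on item~(4) is in fact warranted and more than merely notational: as literally stated, $\ket{1}\ket{\psi}$ is an eigenvector of $C(M)$ only if $M^{*}\ket{\psi}=\lambda^{*}\ket{\psi}$, which fails in general (take any real $M$ with a non-real eigenvalue), so the intended eigenvector is $\ket{1}\ket{\psi^{*}}$---this is a slip in the lemma's statement rather than a gap in your argument.
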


These properties can be derived easily.

\subsubsection{Discrete time evolution}
The properties of $C(\cdot)$ allow us to easily determine how the simulation states in the continuum evolve.  Let $U$ and $\ket{\psi}$ be a reference unitary operation and state and let $\rho^{\prime}$ be as in equation~\ref{eq:simstate}.  By the form of $C(U)$ we have

 \[
C(U)\rho^{\prime}C(U)^{\dagger} = 
a \proj{0}{0} \otimes U\proj{\psi}{\psi}U^{\dagger} + (1-a) \proj{1}{1} \otimes U^{*}\proj{\psi^{*}}{\psi^{*}}U^{T} +\]
\[ c \proj{0}{1}\otimes U\proj{\psi}{\psi^{*}}U^{T} + c^{*} \proj{1}{0} \otimes U^{*}\proj{\psi^{*}}{\psi}U^{\dagger}
 .
\]
But this is the simulation state for $U\ket{\psi}$, and hence $C(U)$ evolves the simulation state $\rho^{\prime}$ to produce a new simulation state corresponding to $U\ket{\psi}$.  Compositions of unitaries will also evolve the state correctly so that the measurement statistics at the end of a circuit will be identical to that of the reference circuit.

General quantum operations may be mapped similarly.  It is easy to verify that in Kraus representation a completely positive map is mapped to a completely positive map if we apply $C(\cdot)$ to each of the Kraus operators.  The trace preserving property is also preserved.  We apply the same reasoning as for $U$ above to with each Kraus operator.  The linearity of $C$ then allows us to conclude that the simulation map will behave correctly.  That is to say, it will map $\rho^{\prime}$ to a new simulation state corresponding to $\ket{\psi}$ evolved under the reference map.

\subsubsection{Continuous time evolution}
We begin with a Hamiltonian $H$.
One can simulate the Schr\"{o}dinger evolution of $H$ on $\ket{\psi}$ by evolving $H^{*}$ on $\ket{\psi^*}$ backwards in time, or equivalently, evolving the system according to $-H^{*}$, and measuring with conjugated observables.

Thus, the simulation of the evolution of $H$ can be achieved using the Hamiltonian
\begin{equation}
H^{\prime} = \proj{0}{0} \otimes H - \proj{1}{1} \otimes H^{*}.
\end{equation}
The evolution of the state according to the Schr\"{o}dinger equation
\begin{equation}
U(t) = e^{-iH^{\prime}t}
\end{equation}
gives
\begin{equation}
e^{-iH^{\prime}t} = \proj{0}{0} \otimes e^{-iHt}  + \proj{1}{1} \otimes e^{-i(-H^{*})t} = \proj{0}{0} \otimes e^{-iHt}  + \proj{1}{1} \otimes \left(e^{-iHt}\right)^{*} = C(e^{-iHt})
\end{equation}
(using the fact that $\exp(A + B) = \exp(A) +  \exp(B)$ when $AB = 0 = BA$, and that  $\exp\left(P \otimes A\right) = P \otimes \exp( A)$ when $P^{2} = P$).
Thus
\begin{equation}
e^{-iH^{\prime}t} = C(e^{-iHt})
\end{equation}
and the simulation evolution tracks that of the reference system.

Another way to arrive at the same $H^{\prime}$ is the approach used in the real simulation \cite{McKague:2009:Simulating-Quan}.  There, rather than considering the Hamiltonian alone, the whole matrix in the exponent, $-iHt$, was considered.  Applying $C(\cdot)$ to this matrix we obtain
\begin{equation}
\proj{0}{0} \otimes (-iHt) + \proj{1}{1} \otimes (-iHt)^{*} =  i\left(\proj{0}{0} \otimes H - \proj{1}{1} \otimes H^{*} \right)t
\end{equation}
Here the fact that $a^{*}b^{*} = (ab)^{*}$ means $(-iH)^{*} = iH^{*}$ and the $-1$ factor is explained.

\section{Real simulation in the family}\label{sec:real_sim_in_cont_sim}

The real simulation presented in \cite{McKague:2009:Simulating-Quan} can be expressed as a simulation in the family defined above through a change of basis.  Starting with the state defined as in \ref{eq:simstate} with $a = c  = \frac{1}{2}$ the simulation state is pure and equal to
 \[
\ket{\psi^{\prime}} = \frac{1}{\sqrt{2}} \ket{0}\ket{\psi} + \frac{1}{\sqrt{2}} \ket{1}\ket{\psi^{*}}.
\]
We next apply a Hadamard gate followed by the relative phase rotation
 \[
\left(
	\begin{matrix}
	1 & 0 \\
	0 & -i \\
	\end{matrix}
\right)
\]
to the extra qubit.  This is the same as applying the unitary
\begin{equation}
U = \left(
	\begin{matrix}
	1  & 1 \\
	-i & i \\
	\end{matrix}
\right).
\end{equation}
The resulting state is
 \[
\frac{1}{2}\ket{0}(\ket{\psi} + \ket{\psi^{*}}) - \frac{i}{2}\ket{1}(\ket{\psi} - \ket{\psi^{*}})
\]
which can be rewritten as
 \[
\ket{0}Re(\ket{\psi}) + \ket{1}Im(\ket{\psi})
\]
which is the real simulation described in \cite{McKague:2009:Simulating-Quan}\footnote{This part of the real simulation was previously well known}.

Operators are transformed quite easily.  For operator $M$ we conjugate $C(M)$ by $U \otimes I$.  From \ref{eq:cont_sim_op2} we see that the resulting operator is
\begin{equation}
(U \otimes I) C(M)(U^{\dagger} \otimes I) = I \otimes Re(M) + XZ \otimes Im(M)
\end{equation}
which is exactly the operator used in the real simulation for $M$.

The states used in the multi-party simulation in \cite{McKague:2009:Simulating-Quan} are stabilized by $Y_{s}\otimes Y_{t}$ for distinct $s,t$.  Also note that the states used in the simulations defined here are stabilized by $Z_{s}\otimes Z_{t}$ for distinct $s,t$.  The qubit-wise transformation applied transformations $Z$ into $Y$, so the multi-party states are also transformed correctly.

\section{Simplified proof for Mayers-Yao self-test}\label{sec:mayers-yaoproof}

\subsection{Proof Overview}
The main advantages of the following new proof for the Mayers-Yao self-test is that it is shorter, clearer, and more naturally extends to the more general test given in this paper.

The proof has two distinct parts.  The first part establishes some equations on the state and observables based on the observed statistics.  These are straightforward and are a direct result of the statistics observed.  Next we use these equations to show that the $X$ and $Z$ observables on each side anti-commute on the support of the state.  The second part uses the anti-commuting observables to construct local isometries that take the state and observables to the reference state and observables.

One important consideration is that of the support of the state.  Since we do not make any claims about the state and observables outside the support of the state we disregard the rest of the Hilbert space.  In this way we will not make any more reference to the support of the state.

\subsection{Observed statistics imply anti-commuting observables}\label{sec:anticommute}

\subsubsection{Statistics}

In the reference test the marginals for each observable are all 0.  That is,
\[
	\bra{\phi_{+}}M \otimes I \ket{\phi_{+}} = 0
\]
for $M \in \{X, Z, D \}$.  (Swapping the systems in this and the following equations gives the same result since $\ket{\phi_{+}}$ is symmetric.)  Measuring the same observable on both sides always give identical outcomes.  Thus
\[
\bra{\phi_{+}}M \otimes M\ket{\phi_{+}} = 1.
\]
Additionally, $X$ and $Z$ measurements are uncorrelated.
 \[
\bra{\phi_{+}} X \otimes Z \ket{\phi_{+}}  = 0.
\]
The interesting part comes when we measure $X$ or $Z$ on one side and $D$ on the other.
\[
\bra{\phi_{+}} X \otimes D \ket{\phi_{+}} = \bra{\phi_{+}} Z \otimes D \ket{\phi_{+}} = \frac{1}{\sqrt{2}}.
\]

\subsubsection{State equalities}

Using the equations from section~\ref{sec:reftest} on the measurement outcomes combined with the fact that $\ket{\psi}$ is normalized gives us the following equations

\begin{eqnarray}
\ket{\psi} & = & X_{A} \otimes X_{B} \ket{\psi} \\
& = & Z_{A} \otimes Z_{B} \ket{\psi} \\
& = & D_{A} \otimes D_{B} \ket{\psi} \\
X_{A} \otimes I \ket{\psi} & = & I \otimes X_{B} \ket{\psi} \label{eq:xAxB} \\
Z_{A} \otimes I \ket{\psi} & = & I \otimes Z_{B} \ket{\psi} \label{eq:zAzB} \\
D_{A} \otimes I \ket{\psi} & = & I \otimes D_{B} \ket{\psi} \\
X_{A}Z_{A} \otimes I \ket{\psi} & = & I \otimes Z_{B}X_{B} \ket{\psi} \\
Z_{A}X_{A} \otimes I \ket{\psi} & = & I \otimes X_{B}Z_{B} \ket{\psi} \\
X_{A}Z_{A} \otimes I \ket{\psi} & = & X_{A} \otimes Z_{B} \ket{\psi} \\
Z_{A}X_{A} \otimes I \ket{\psi} & = & Z_{A} \otimes X_{B} \ket{\psi}
\end{eqnarray}

We can also establish some orthogonality relationships between various vectors.  In particular the vectors $\ket{\psi}, X_{A} \otimes I \ket{\psi}, Z_{A} \otimes I \ket{\psi}, X_{A}Z_{A} \otimes I \ket{\psi}$ are pairwise orthogonal.

Our goal for the remainder of the proof is to show that any state for which these equations hold must be equivalent to $\ket{\phi_{+}}$.

\subsubsection{Anti-commuting observables}\label{sec:my_anticommuting}

We now move to more salient matters.  First, we note that  $D_{A} \otimes I \ket{\psi}$ must be in the space spanned by $X_{A} \otimes I \ket{\psi}$ and $Z_{A} \otimes I \ket{\psi}$ because it has overlap $\frac{1}{\sqrt{2}}$ with each of these orthogonal vectors, and it has norm 1.  Thus
 \[
 D_{A} \otimes I \ket{\psi} = \frac{X_{A} + Z_{A}}{\sqrt{2}} \otimes I \ket{\psi}
\]
and analogously for $I \otimes D_{B} \ket{\psi}$.  This allows us to make the following deductions

\begin{eqnarray*}
\ket{\psi} & = & D_{A} \otimes D_{B} \ket{\psi} \\
& = & \frac{1}{2} (X_{A} + Z_{A}) \otimes (X_{B} + Z_{B}) \ket{\psi} \\
& = & \ket{\psi} + (X_{A}\otimes Z_{B} + Z_{A} \otimes X_{B}) \ket{\psi}
\end{eqnarray*}

\noindent
Applying equations~\ref{eq:xAxB} and~\ref{eq:zAzB} we obtain

\begin{equation}
(X_{A}Z_{A} + Z_{A} X_{A}) \otimes I \ket{\psi} = 0.
\end{equation}
By Lemma~\ref{lemma:anticommute_support}, below,  it follows that $X_{A}$ and $Z_{A}$ anti-commute on the support of $\ket{\psi}$ on $A$.  Similarly, the observables $X_{B}$ and $Z_{B}$ anti-commute on support of $\ket{\psi}$ on $B$.

\begin{lemma}\label{lemma:anticommute_support}
Let $X_{A}$ and $Z_{A}$ be operators and $\ket{\psi}_{AB}$ a bipartite state such that
\begin{equation}
X_{A}Z_{A}\otimes I_{B} \ket{\psi}_{AB} = - Z_{A}X_{A} \otimes I_{B} \ket{\psi}_{AB}.
\end{equation}
then $X_{A}Z_{A} \ket{\phi} = - Z_{A}X_{A} \ket{\phi}$ for any $\ket{\phi}$ in the support of $\ket{\psi}_{AB}$ on $A$.
\end{lemma}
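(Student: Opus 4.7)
The plan is to unpack the hypothesis via a Schmidt decomposition of $\ket{\psi}_{AB}$, which immediately pins down the action of $X_AZ_A + Z_AX_A$ on each Schmidt vector on the $A$ side, and then extend to the whole support by linearity.

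Concretely, set $M := X_AZ_A + Z_AX_A$, so that the hypothesis reads $(M \otimes I_B)\ket{\psi}_{AB} = 0$. Write the Schmidt decomposition
\[
\ket{\psi}_{AB} = \sum_i \lambda_i \ket{a_i}_A \ket{b_i}_B
\]
with $\lambda_i > 0$ and $\{\ket{a_i}\}$, $\{\ket{b_i}\}$ orthonormal. By definition, the support of $\ket{\psi}_{AB}$ on $A$ (i.e.\ the support of the reduced density operator $\tr_B(\proj{\psi}{\psi})$) is exactly $\mathrm{span}\{\ket{a_i}\}$.

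Applying $M \otimes I_B$ to the Schmidt expansion gives
\[
0 = (M \otimes I_B)\ket{\psi}_{AB} = \sum_i \lambda_i\, (M\ket{a_i}_A) \otimes \ket{b_i}_B.
\]
Taking the inner product of both sides with $I_A \otimes \bra{b_j}_B$ and using orthonormality of $\{\ket{b_i}\}$ yields $\lambda_j M\ket{a_j}_A = 0$, and since $\lambda_j > 0$ we conclude $M\ket{a_j}_A = 0$ for every $j$. Any $\ket{\phi}$ in the support on $A$ is a linear combination $\ket{\phi} = \sum_j c_j \ket{a_j}_A$, so by linearity $M\ket{\phi} = 0$, i.e.\ $X_AZ_A\ket{\phi} = -Z_AX_A\ket{\phi}$, as claimed.

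The argument is essentially a one-liner once the Schmidt decomposition is invoked; there is no real obstacle. The only thing to be careful about is the definition of ``support on $A$'': one must verify that it coincides with $\mathrm{span}\{\ket{a_i}: \lambda_i > 0\}$, which is standard, and that the conclusion is stated for all $\ket{\phi}$ in this subspace (not merely for vectors obtained as $M_A \otimes I_B\ket{\psi}$ for some $M_A$). The extension to arbitrary $\ket{\phi}$ in the support is immediate from linearity of $M$.
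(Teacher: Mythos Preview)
Your proof is correct and follows essentially the same approach as the paper: both invoke the Schmidt decomposition of $\ket{\psi}_{AB}$ and then contract on the $B$ side to isolate each Schmidt vector on $A$. The only cosmetic difference is that you apply $I_A\otimes\bra{b_j}$ to obtain $M\ket{a_j}=0$ directly, whereas the paper takes the full inner product with $\ket{k}_A\ket{k'}_B$ to pin down matrix elements of $X_AZ_A+Z_AX_A$ on the support; the content is the same.
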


\begin{proof}
Let
\begin{equation}
\ket{\psi} = \sum_{j} \lambda_{j} \ket{j}_{A} \ket{j}_{B}.
\end{equation}
be the singular value decomposition of $\ket{\psi}$.  We then have
\begin{equation}
X_{A}Z_{A}\otimes I_{B} \sum_{j} \lambda_{j} \ket{j}_{A} \ket{j}_{B} = - Z_{A}X_{A} \otimes I_{B} \sum_{j} \lambda_{j} \ket{j}_{A} \ket{j}_{B}.
\end{equation}
We now take the inner product with $\ket{k}_{A}\ket{k^{\prime}}_{B}$ for some $k, k^{\prime}$ to obtain
\begin{equation}
\lambda_{j}\bra{k}_{A}X_{A}Z_{A}\ket{j}_{A} = -\lambda_{j}\bra{k}_{A}X_{A}Z_{A}\ket{j}_{A}
\end{equation}
When we restrict to the subspace to the subspace spanned by the $\ket{k}_{A}$ for which $\lambda_{k} \neq 0$ (i.e. on the support of $\ket{\psi}$ on $A$) we find that $X_{A}Z_{A} = - Z_{A}X_{A}$.

\end{proof}

\subsection{Local unitary transformations}

Now we can easily build the local unitaries required to extract the EPR pair.  We use the circuit shown in figure~\ref{fig:epr_local_unitary_circuit}.  There the outer $\ket{0}$ states are added while the two inner wires carry the two halves of the bipartite state $\ket{\psi}$.  This circuit essentially builds a SWAP gate out of two CNOT gates (the usual third gate is not necessary since we initialize with $\ket{0}$.)  The SWAP gate extracts the entanglement out of $\ket{\psi}$ and swaps in a product state.

\begin{figure}
\[
\Qcircuit @C=0.5cm @R=0.5cm{
\lstick{\ket{0}}  & \gate{H} & \ctrl{1}  & \gate{H} & \ctrl{1} &\qw \\
 &  & \gate{Z_{A}} & \qw & \gate{X_{A}} & \qw\\
 \lstick{\ket{\psi}} \ar@{-}[ur] \ar@{-}[dr] & & & & & \\
              &    & \gate{Z_{B}} & \qw & \gate{X_{B}} & \qw\\
\lstick{\ket{0}}     & \gate{H} & \ctrl{-1} & \gate{H} & \ctrl{-1} & \qw
}
\]
\caption{Circuit for $\Phi$ showing equivalence of physical circuit to reference circuit in Mayers-Yao test}
\label{fig:epr_local_unitary_circuit}
\end{figure}
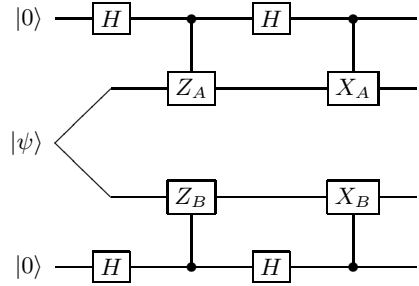

The circuit gives two isometries, one for each wire in EPR test circuit, which we denote $\Phi_{A}$ and $\Phi_{B}$.

\subsubsection{State}

After applying this circuit the resulting state is
 \begin{eqnarray*}
 \Phi_{A} \otimes \Phi_{B} ( \ket{\psi}) & = &\frac{1}{4} (I + Z_{A}) \otimes (I + Z_{B}) \ket{\psi}\ket{00} \\
 & + & \frac{1}{4}(I + Z_{A}) \otimes X_{B}(I - Z_{B}) \ket{\psi}\ket{01} \\
 & + & \frac{1}{4}X_{A}(I- Z_{A}) \otimes (I + Z_{B}) \ket{\psi}\ket{10}\\
 & + & \frac{1}{4} X_{A} (I - Z_{A}) \otimes X_{B} (I-Z_{B}) \ket{\psi}\ket{11} \\
\end{eqnarray*}
Applying some equations and the anti-commuting result from the previous section we find that this is equal to
 \[
  \Phi_{A} \otimes \Phi_{B} ( \ket{\psi}) = \frac{1}{4}(I+Z_{A}) \otimes (I + Z_{B}) \ket{\psi}\left(\ket{00} + \ket{11} \right) +
 \]
 \[
 (I + Z_{A})(I - Z_{A})\otimes X_{B}\ket{\psi}\ket{01} +
  X_{A}\otimes (I + Z_{B})(I - Z_{B})\ket{\psi}\ket{10}
\]
 \[
  =  \frac{1}{\sqrt{2}} (I \otimes I + I \otimes Z_{B}) \ket{\psi} \ket{\phi_{+}}\]

This may look curious since $I + Z_{A}$ and $I + Z_{B}$ are not unitary.  In fact it is easy to show that the final state still has the correct norm.  To give some intuition, note that in the reference case we want to extract $\ket{\phi_{+}}$ and swap in $\ket{00} = \frac{1}{2\sqrt{2}}(I+Z) \otimes (I+Z)\ket{\phi_{+}}$.

\subsubsection{Measurement operators}

We now turn to equivalence of the measurement operators.  We start with $X_{A}$ (the result for $X_{B}$ follows analogously).  Applying $X_{A}$ to $\ket{\psi}$ before applying the circuit is the same as applying it at the end, with a $-1$ phase introduced by anti-commuting past the controlled $Z_{A}$ operation (recall from section~\ref{sec:my_anticommuting} that $X_{A}$ and $Z_{A}$ anti-commute on the relevant subspace).  The resulting state is
 \begin{eqnarray*}
  \Phi_{A} \otimes \Phi_{B} ( X_{A} \otimes I_{B}\ket{\psi})& =  &\frac{1}{4} X_{A}(I - Z_{A}) \otimes (I + Z_{B}) \ket{\psi}\ket{00} \\
 & + & \frac{1}{4}X_{A}(I - Z_{A}) \otimes X_{B}(I - Z_{B}) \ket{\psi}\ket{01} \\
 & + & \frac{1}{4}(I+ Z_{A}) \otimes (I + Z_{B}) \ket{\psi}\ket{10}\\
 & + & \frac{1}{4}  (I + Z_{A}) \otimes X_{B} (I-Z_{B}) \ket{\psi}\ket{11} \\
\end{eqnarray*}
Following the same logic as used in the state equivalence, we find that the final state is
 \[
 \Phi_{A} \otimes \Phi_{B} ( X_{A} \otimes I \ket{\psi})  = \frac{1}{\sqrt{2}} (I \otimes I + I \otimes Z_{B}) \ket{\psi}(X \otimes I) \ket{\phi_{+}}\]

For the $Z_{A}$ operation, we see that the effect is a $-1$ phase kicked back through the final controlled $X_{A}$ operation.  This phase appears on the terms with $\ket{1}$ in the qubit, exactly as if a $Z$ operation had been applied to the qubit.  The equivalence for the $D$ operators results from the fact that $D = \frac{X + Z}{\sqrt{2}}$ on the relevant subspace, and linearity.

This concludes the proof of Theorem~\ref{theorem:mayers-yao} .

\end{document}